\let\csname equation*\endcsname\relax
\let\csname endequation*\endcsname\relax
\newcommand{\p}{\partial}
\newcommand{\bd}{\begin{definition}}                
\newcommand{\ed}{\end{definition}}                  
\newcommand{\bc}{\begin{corollary}}                 
\newcommand{\ec}{\end{corollary}}                   
\newcommand{\bl}{\begin{lemma}}                     
\newcommand{\el}{\end{lemma}}                       
\newcommand{\bp}{\begin{proposition}}            
\newcommand{\ep}{\end{proposition}}                
\newcommand{\bere}{\begin{remark}}                  
\newcommand{\ere}{\end{remark}}                     
\newcommand{\bt}{\begin{theorem}}
\newcommand{\et}{\end{theorem}}
\newcommand{\bit}{\begin{itemize}}
\newcommand{\eit}{\end{itemize}}
\newtheorem{theorem}{Theorem}[section]
\newtheorem{corollary}[theorem]{Corollary}
\newtheorem{lemma}[theorem]{Lemma}
\newtheorem{proposition}[theorem]{Proposition}
\theoremstyle{definition}
\newtheorem{definition}[theorem]{Definition}
\theoremstyle{remark}
\newtheorem{remark}[theorem]{Remark}
\begin{document}



\title{A gravitational collapse singularity theorem consistent with black hole evaporation}

\author{E. Minguzzi\thanks{
Dipartimento di Matematica e Informatica ``U. Dini'', Universit\`a
degli Studi di Firenze, Via S. Marta 3,  I-50139 Firenze, Italy.
Corresponding author e-mail: ettore.minguzzi@unifi.it }}



\date{}

\maketitle

\begin{abstract}
\noindent  The global hyperbolicity  assumption present in gravitational collapse singularity theorems is in tension with the quantum mechanical phenomenon of black hole evaporation. In this work I show that the causality conditions in Penrose's theorem can be almost completely removed. As a result, it is possible to infer the formation of spacetime singularities even in the absence of predictability and hence compatibly with quantum field theory and black hole evaporation.
\end{abstract}


The celebrated Penrose singularity theorem \cite{penrose65b,senovilla15} establishes that in a globally hyperbolic spacetime admitting a non-compact Cauchy hypersurface $H$, under the null energy condition, any trapped surface $S$  leads to the formation of a spacetime singularity in its causal future.



If the weak cosmic censorship conjecture is correct, then the spacetime singularity will be hidden behind a horizon so leading to the formation of a black hole. These objects, particularly stationary ones, have then been investigated with the theory of quantum fields over curved spacetimes. A rather robust prediction  states that the black hole will eventually evaporate in a finite, though huge, time \cite{hawking74b,hawking75,wald94}.

It was observed by Kodama \cite{kodama79} and Wald \cite{wald84b} that  the phenomenon of black hole evaporation is in tension with global hyperbolicity and hence with the possibility of predicting the evolution of the spacetime manifold by means of Einstein's equations. Recently Lesourd \cite{lesourd19} has given more precise arguments which show that even causal continuity, a causality condition much weaker than global hyperbolicity, would be violated.

We see that we are in the presence of a loophole argument. In order to infer the formation of singularities we need to assume global hyperbolicity. Unfortunately, one would expect the formation of a black hole and hence its very evaporation. If so the global assumption of global hyperbolicity could not have been satisfied in the first place.
In other words, the hypotheses
of Penrose's theorem seem too strong for a realistic universe, as taking
into account the quantum effects it appears that the existence of a
black hole implies a certain amount of unpredictability.
These observations open the possibility that
trapped surfaces do not necessarily imply the formation of singularities, at least when quantum effects are taken into account.
The situation would be clarified and the problem solved  if we could generalize Penrose's theorem by removing the assumption of  global hyperbolicity.

Much of the discussion about the information loss paradox might have been influenced by the fact that Penrose's theorem requires global hyperbolicity.
It is clear that if the very classical picture of the gravitational collapse had been itself non deterministic  from the outset, then
the prediction of information loss in a quantum field theoretical context would have been perceived as less problematic and the very feeling of a paradox would have been reduced \cite{unruh17}.
We shall indeed show that the very classical picture for gravitational collapse does not require determinism.

Let us discuss whether Penrose's theorem can really be improved.
To start with,  we observe that the Einstein's equation is not used in this theorem, so a deterministic development of the spacetime metric is not required. Only the null energy condition derived from that equation is used. Fortunately, weaker averaged formulations exist that have the same focussing effect on   geodesics but which might be preserved  in a quantum field theoretical context \cite{tipler78,roman88,graham07,wall10,galloway10,fewster11}. Alternatively, when violations of the energy conditions are due to scalar fields one can use weaker versions obtained by replacing the Ricci tensor with the Bakry-Emery Ricci tensor \cite{case10,woolgar13}. All the steps in the proof of Penrose's theorem pass through to the modified Ricci case, as will those of the singularity theorems that we shall present in the following.

It must be mentioned that there already exist singularity theorems that do not assume causality conditions. The most important is Hawking's 1967 theorem \cite{hawking67,hawking73} which establishes that, under suitable energy conditions, expanding partial Cauchy hypersurfaces imply the existence of past singularities. Other variants that get rid of causality assumptions have been considered by Borde \cite{borde85} and Galloway \cite{galloway86b}. It can be observed that all these examples are cosmological in scope. The conditions they impose on spacelike hypersurfaces have global consequences because these hypersurfaces are themselves, so to say,  global. On the contrary, trapped surfaces may form in limited regions of spacetime and getting rid of global causality conditions becomes more difficult.

The  assumption of global hyperbolicity in Penrose's theorem was soon regarded as too strong and quite undesirable already from the classical point of view. It should be recalled that the addition of a small charge or angular momentum to a Schwarzschild  black hole alters its causal structure dramatically, for the maximally extended solutions become the Reissner-Nordstr\"om and Kerr solutions, neither of which is globally hyperbolic.

Hawking and Ellis commented on Penrose's theorem as follows \cite[p.\ 285]{hawking73}

\begin{quote}
 The real weakness of the theorem is the requirement that $H$ be a Cauchy surface. This was used in two places: first, to show that
$(M,g)$ was causally simple which implied that the generators of $\p J^+(S)$ had past endpoints on $S$, [i.e.\ $\p J^+(S)=E^+(S)$] and second, to ensure that under the [global timelike vector field flow-projection] map every point of $\p J^+(S)$ was mapped into a point of $H$.

[Penrose's theorem] does not answer the question of whether singularities occur in physically realistic solutions. To decide this we need a theorem which does not assume the existence of Cauchy surfaces.
\end{quote}

Hawking and Penrose answered these problems with the development of the singularity theorem that bears their name \cite{hawking70}.
Unfortunately, this theorem is in several respects weaker than Penrose's. The singularity might well be to the past of the future trapped surface so, in the context of a spacetime that had origin through a Big Bang singularity, Hawking and Penrose's theorem does not provide any new information for what concerns the formation of a singularity through gravitational collapse. The singularity that it signals could just be the Big Bang singularity. Moreover, the genericity condition there appearing has also been criticized as not always physically justified particularly when there are  inextendible causal curves imprisoned in compact sets or compact Cauchy horizons \cite[Remark 6.22]{minguzzi18b}.


We are left with the strategy of improving Penrose's theorem.
Bardeen \cite{bardeen68} gave an example of null geodesically complete spacetime which satisfies all the assumptions of Penrose's theorem except for global hyperbolicity \cite{hawking73,borde94}. It was obtained through a regularization of the singularity in the Reissner-Nordstr\"om solution.
Considering this spacetime Hawking and Ellis concluded that the global hyperbolicity condition in Penrose's theorem is necessary  \cite{hawking73}.

Nevertheless, Borde \cite{borde94} noticed that Bardeen's example contains a compact partial Cauchy hypersurface, a fact which might indicate that rather than global hyperbolicity, what is essential for the validity of the theorem is a sort of  topological condition  to prevent the compact set $E^+(S)$ from wrapping around the universe (or, stated in more suggestive terms, from swallowing the whole universe). His analysis supported the hope that the global hyperbolicity condition could be weakened in Penrose's theorem.
So far this generalization had been elusive, but with this work we shall indeed show that Borde's intuition was correct.

Since we shall have to replace the global hyperbolicity condition with some weaker conditions, let us have a look at the conformal structure that has been proposed to represent an evaporating black hole \cite{hawking75,hiscock81,wald84b,brown08}, see Fig.\ \ref{eva}.

\begin{figure}[htb]
\centering
\includegraphics[width=5cm]{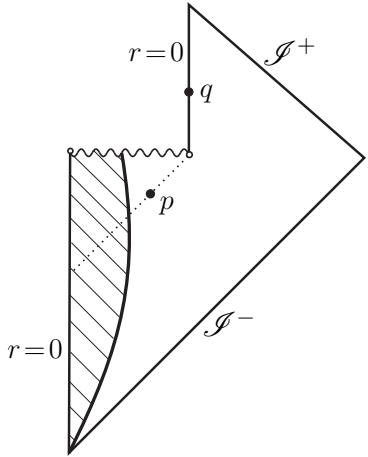}
\caption{The conformal diagram for an evaporating black hole \cite{hawking75}. The points at the edge of the spacelike singularity do not belong to the manifold.} \label{eva}
\end{figure}

As the figure illustrates there are pairs of points such that  $p \in \overline{J^-(q)}$ but $q\notin \overline{J^+(p)}$. This means that the proposed spacetime is not future reflecting though it is past reflecting \cite{minguzzi18b}.

In this work  we first give arguments, complementing those by Lesourd \cite{lesourd19}, which show that in the presence of an evaporating black hole, or more generally of an evaporating null hypersurface (to be defined), future reflectivity cannot hold.
We shall then prove that a version of Penrose's theorem holds just under past reflectivity, cf.\ Theorems \ref{pen} and \ref{pem}.

By the previous result this condition looks as a somewhat technical notion that guarantees that we are not working inside the time dual of an evaporating black hole.
There is the chance that past reflectivity holds  on any reasonable spacetime, for it might represent a sort of entropy condition.
Non-time symmetric entropy conditions are imposed in PDE theory to guarantee the existence of unique solutions \cite{evans98}, so it would be interesting to clarify  if past reflectivity can play a similar role, see also \cite{wald84b,lesourd19}. The validity of past reflectivity would not exclude the possibility that information flows in from a timelike boundary, but such a flow would have to be continuous with no burst or shocks.

Also it can be noticed that we are not going to impose any other causality condition save for the assumption that the trapped surface does not intersect the region of chronology violation (hence the spacetime is non-totally vicious). So we do not impose the compactness of the causal diamonds,  the closure of the causal relation, or the chronology of spacetime. As a consequence, we do not assume the existence of time functions, of arbitrarily small causally convex neighborhoods, or that the chronological relation distinguishes events.

In fact the condition of past reflectivity, though conformally invariant, does not belong to the causal ladder of spacetimes, so its role is not that of preventing nasty influence from infinity nor that of getting rid of almost closed causal curves. It is rather a topological condition on the nature of the causality relation, and hence it places restrictions on {\em how} information propagates on spacetime (it belongs to the so called transverse ladder \cite{minguzzi08b,minguzzi18b}). It is implied by causal continuity, and hence by global hyperbolicity (it is also satisfied in globally hyperbolic spacetimes with timelike boundary \cite[Lemma 3.7]{ake18}),  by the existence of a complete timelike Killing vector field \cite{clarke88}, or by the continuity of the Lorentzian distance \cite[Thm.\ 4.24]{beem96} \cite[Prop.\ 5.2]{minguzzi18b}.

\section{Evaporation and violation of future reflectivity}

We denote with $I$ the chronological relation, and with $J$ the causal relation. The set $E^+(p)=J^+(p)\backslash I^+(p)$ is the {\em future horismos} of $p$. For a subset $S$, $E^+(S):=J^+(S)\backslash I^+(S)$. The inclusion $\subset$ is reflexive. Unless otherwise specified, causal curves $\gamma\colon I \to M$ are future directed and with $\gamma$ we might also denote the image of the curve.

Since in this work we shall make repeated use of the notion of future (and past) reflectivity, it is
 convenient to recall its many equivalent formulations, each giving different insights on its geometrical and physical meaning \cite[Def.\ 4.6]{minguzzi18b}

\begin{definition} \label{hvw}
The spacetime $(M,g)$ is {\em future reflecting} if any of the following equivalent properties holds true. For every $p,q\in M$
\begin{itemize}
\item[(a)]  $p\in \overline{J^{-}(q)} \Rightarrow q\in \overline{J^+(p)} $,
\item[(b)] $ p\in \p {J}^{-}(q) \Rightarrow q\in \p{J}^+(p)$ ,
\item[(c)] $I^-(p)   \subset I^-(q) \Rightarrow I^+(q)\subset I^+(p)$ ,
\item[(d)] $\uparrow\! I^-(p)=I^+(p)$,
\item[(e)]  the volume function $t^+(p)=-\mu(I^+(p))$ is continuous.
\end{itemize}
\end{definition}
Here $\uparrow\! I^-(p):= \textrm{Int} [\cap_{r\in I^{-}(p) } I^+(r)]$ is the {\em common future} while $\mu$ is any finite spacetime measure absolutely continuous with respect to the Lebesgue measures induced by the coordinate charts.

As is well known the black hole region is given by the set $B=M\backslash \overline{I^-(\mathscr{I}^+)}$, where $\mathscr{I}^+$ is the future null infinity in  Penrose's conformal boundary \cite{hawking73}. The achronal set $N:=\p B= \p I^-(\mathscr{I}^+)$ is a locally achronal topologically embedded hypersurface generated by future inextendible lightlike geodesics, namely it is a $C^0$  future null hypersurface \cite{galloway00}  commonly referred as the {\em horizon of the black hole}.
In general by {\em horizon} we shall mean a  $C^0$  future null hypersurface.

Although the reader can keep in mind the relevant  example of a black hole horizon we shall now try to make mathematical sense of what ``evaporating'' or ``evaporated'' could mean for general horizons. Typically what distinguishes an evaporating black hole is the possibility, for an outside observer that has escaped the fate of falling into the black hole, of observing matter fall into the black hole up to a certain proper time. After that instant looking further does not add any new information, for the observer does not see any more matter crossing or approaching the horizon. From the point of view of the observer the gravitational collapse has come to an end. This is in stark contrast with what happens in a Schwarzschild black hole, for which an exterior observer would keep receiving new information from the matter that falls into the black hole.

In order to formalize this concept we need to understand that the observer can look at different regions of the black hole. Let $p\in N$ be a representative point of such a region and let $r\in I^-(p)$. Consider two timelike curves $\gamma$ and $\sigma$,  the former curve $\gamma\colon [0,\infty)\to M$, $\gamma(0)=r$,
$p =\gamma(a)$, $a>0$, represents matter that leaves $r$ and crosses the horizon at $p$, while the latter curve $\sigma\colon [0,\infty)\to M$, $\sigma(0)=r$, $\sigma \cap J^+(N)=\emptyset$, is future inextendible and represents an observer that looks at the infalling matter without being itself causally influenced by the horizon (Fig.\ \ref{fgo}).

\begin{figure}[ht]
\centering
\includegraphics[width=6cm]{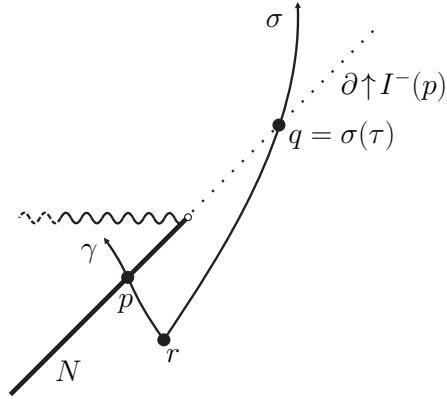}
\caption{Future reflectivity is violated by evaporating future null hypersurfaces.} \label{fgo}
\end{figure}

We are interested in those observers $\sigma$ that can witness the whole falling history, i.e.\ $\gamma([0,a)) \subset J^-(\sigma)$. There are two possibilities, either there is $t>0$ such that $\gamma([0,a)) \subset J^{-}\big(\sigma([0,t])\big)$ or there is not.
\begin{definition}
We say that the  horizon $N$ at the point $p$, or better over the whole generator passing through $p$, has {\em  evaporated} or that it is {\em evaporating} from the point of view of the observer $\sigma$, if there is  $t>0$ such that $\gamma([0,a)) \subset J^{-}\big(\sigma([0,t])\big)$.
\end{definition}
The observer $\sigma$ might wait further time after the event $\sigma(t)$ but will not receive any more information from the matter that has crossed the horizon at $p$.

Let us redefine $\tau=\inf t$ to be the infimum of all the parameters $t$ for which the inclusion holds, and let $q=\sigma(\tau)$. Then for every $p'\in \gamma([0,a))$ we have $ q \in \overline{J^+(p')}$. Let us assume that past reflectivity holds, then $p'\in \overline{J^-(q)}$, and by the arbitrariness of $p'$ and the fact that $\gamma$ is timelike we get $\gamma([0,a))\subset J^{-}(q)$, namely $\tau$ is the least value for which the inclusion holds.

Since $\gamma([0,a))\subset J^{-}(q)$, $\gamma(a)=p$,  we have $p\in \overline{J^{-}(q)}$. However, it cannot be $q\in \overline{J^+(p)}$ since letting $q'=\sigma(\tau')$, $\tau'>\tau$, we would have $q'\gg q$, and hence $q'\in I^+(p)$ in contradiction with the assumption that $\sigma$ does not intersect $J^+(N)$.
We conclude that
\begin{theorem}
Past reflecting spacetimes which contain an evaporating $C^0$ future null hypersurface $N$ are not future reflecting.
\end{theorem}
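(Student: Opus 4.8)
The plan is to turn the paragraph preceding the statement into a rigorous argument, taking care that \emph{no} causality condition is available, so that $J^{\pm}$ need not be closed and every passage from a set to its closure must be justified via the openness of $I^{\pm}$ together with the push-up property. First I would record the data furnished by the evaporation hypothesis: a point $p\in N$, a point $r\in I^-(p)$, an infalling timelike curve $\gamma$ with $\gamma(0)=r$, $\gamma(a)=p$, and an observer $\sigma\colon[0,\infty)\to M$ with $\sigma(0)=r$ and $\sigma\cap J^+(N)=\emptyset$ for which the set $T:=\{t>0:\gamma([0,a))\subset J^-(\sigma([0,t]))\}$ is non-empty. Since $\sigma([0,t])$ grows with $t$, the set $T$ is upward closed, so I put $\tau:=\inf T\in[0,\infty)$ (finite because $T\neq\emptyset$) and $q:=\sigma(\tau)$.

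The first substantive step is the claim that $q\in\overline{J^+(p')}$ for every $p'\in\gamma([0,a))$. I would analyse $S_{p'}:=\{s\ge 0:\sigma(s)\in J^+(p')\}$, noting that it is upward closed by push-up ($p'\le\sigma(s)\ll\sigma(s')$ forces $p'\ll\sigma(s')$), and that $\inf S_{p'}\le\tau$ because $p'\in J^-(\sigma([0,t]))$ for every $t>\tau$. Then I split cases: if $\inf S_{p'}<\tau$ then $q=\sigma(\tau)\in J^+(p')$ outright, while if $\inf S_{p'}=\tau$ then $\sigma(s)\in J^+(p')$ for all $s>\tau$ and letting $s\downarrow\tau$ places $q$ in $\overline{J^+(p')}$.

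The second step feeds this into past reflectivity (the time-dual of Definition \ref{hvw}(a)), yielding $p'\in\overline{J^-(q)}$ for every $p'\in\gamma([0,a))$, and then upgrades it to $\gamma([0,a))\subset I^-(q)$: given $s'<a$, choose $s''\in(s',a)$; then $\gamma(s'')\in\overline{J^-(q)}$ lies in the open set $I^+(\gamma(s'))$, so $J^-(q)$ meets $I^+(\gamma(s'))$ and push-up gives $\gamma(s')\ll q$. Hence $p=\gamma(a)\in\overline{J^-(q)}$. For the contradiction I would now assume $(M,g)$ is future reflecting; Definition \ref{hvw}(a) applied to $p\in\overline{J^-(q)}$ then gives $q\in\overline{J^+(p)}$. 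Picking any $\tau'>\tau$ and $q'=\sigma(\tau')$, one has $q'\in I^+(q)$ (open), so $J^+(p)$ meets $I^-(q')$ and push-up yields $p\ll q'$, i.e.\ $q'\in I^+(p)\subset J^+(N)$, contradicting $\sigma\cap J^+(N)=\emptyset$. Therefore $(M,g)$ is not future reflecting.

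I do not expect a deep obstacle: the proof is essentially an unwinding of the three notions involved (evaporation, past reflectivity, future reflectivity). The only points that genuinely need attention are the two ``upgrade'' steps above, where the absence of any causality hypothesis forbids treating $J^{\pm}$ as closed, so everything must be routed through the open cones $I^{\pm}$ and the push-up property; and the small amount of bookkeeping that keeps the claim $q\in\overline{J^+(p')}$ insensitive to whether the infimum defining $\tau$ is attained (a posteriori it is, precisely by the past-reflectivity step, which shows $\tau\in T$ and $J^-(\sigma([0,\tau]))=J^-(q)$).
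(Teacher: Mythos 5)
Your proof is correct and follows essentially the same route as the paper's own argument (the paragraph preceding the theorem statement): define $\tau$ as the infimum, show $q=\sigma(\tau)\in\overline{J^+(p')}$ for all $p'\in\gamma([0,a))$, apply past reflectivity to get $p\in\overline{J^-(q)}$, and derive the contradiction with $\sigma\cap J^+(N)=\emptyset$ via push-up at $q'=\sigma(\tau')$. Your treatment of the two case distinctions for $\inf S_{p'}$ merely makes explicit what the paper asserts without detail.
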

Notice that every $r'\in \gamma([0,a))$ belongs to $J^{-}(q)$, and since $\gamma$ is timelike, to $I^{-} (q)$, thus $q\in \cap_{r\in I^{-}(p) } I^+(r)$. This is a future set whose interior is the common future $\uparrow  \! \! I^-(p)$, thus $q\in \p [\uparrow\!\! I^-(p)\backslash  I^+(p)]$, where  the set in square brackets is known to be empty under future reflectivity. An observer can also see the horizon evaporate completely, not just the portion near $p$. For that, the observer's  worldline passes through the events belonging to the set $\p [\cap_{r\in I^{-}(N) } I^+(r)]$. Once those events are passed the observer cannot receive any new information from the infalling matter.

\section{Singularities from trapped surfaces}
We recall that the {\em chronology violating set} $\mathcal{C}$ consists of all points $p\in M$ through which there passes a closed timelike curve. No achronal set can intersect $\mathcal{C}$, so on a subset $S$ the condition ``$S$ is achronal'' is stronger than ``$S\cap \mathcal{C}=\emptyset$''.

The reader might be familiar with the notion of {\em future trapped set} which is a non-empty set $S$ such
that $E^{+}(S)$  is non-empty and compact. Under very weak causality conditions  the notion of {\em null araying set} is more convenient \cite{minguzzi18b}.

\begin{definition}
A {\em future lightlike $S$-ray} is a future inextendible  causal curve which starts from $S$ and does not intersect $I^+(S)$.
A non-empty set $S$ is a {\em future null araying set}  if there are no future lightlike $S$-rays.
\end{definition}

Trapped and araying sets are related through the next result \cite[Thm.\ 2.116]{minguzzi18b}.

\begin{theorem} \label{boc}
Let $S$ be a non-empty compact set that does not intersect $\mathcal{C}$. If $S$ is a future null araying set then it is a future trapped set.
\end{theorem}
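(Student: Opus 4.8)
The plan is to establish the two defining properties of a future trapped set for $E^{+}(S)$ separately: nonemptiness, via a finite-subcover argument that trades on the compactness of $S$ and the absence of closed timelike curves through it, and compactness, via a limit curve argument that manufactures a forbidden lightlike $S$-ray. For nonemptiness I would argue by contradiction: if $E^{+}(S)=\emptyset$ then $J^{+}(S)=I^{+}(S)$ (the inclusion $I^{+}(S)\subseteq J^{+}(S)$ being automatic), and since $\subset$ is reflexive, $S\subseteq J^{+}(S)=I^{+}(S)=\bigcup_{s\in S}I^{+}(s)$. Compactness of $S$ yields $s_{1},\dots,s_{k}\in S$ with $S\subseteq I^{+}(s_{1})\cup\cdots\cup I^{+}(s_{k})$, so each $s_{i}$ lies in some $I^{+}(s_{\sigma(i)})$, i.e.\ $s_{\sigma(i)}\ll s_{i}$, for a self-map $\sigma$ of the finite index set. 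Iterating $\sigma$ produces a cycle $c_{1}\to\cdots\to c_{\ell}\to c_{1}$, hence $s_{c_{1}}\gg s_{c_{2}}\gg\cdots\gg s_{c_{\ell}}\gg s_{c_{1}}$, and concatenating these timelike segments gives a closed timelike curve through $s_{c_{1}}\in S$, contradicting $S\cap\mathcal{C}=\emptyset$.

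For compactness, fix a complete auxiliary Riemannian metric $h$ and suppose $E^{+}(S)$ is not compact, so there is a sequence $q_{n}\in E^{+}(S)$ with no subsequence converging to a point of $E^{+}(S)$. For each $n$ choose a causal curve $\eta_{n}$ from some $s_{n}\in S$ to $q_{n}$; since $J^{+}(I^{+}(S))=I^{+}(S)$ and $q_{n}\notin I^{+}(S)$, no point of $\eta_{n}$ lies in $I^{+}(S)$, so $\eta_{n}\subseteq E^{+}(S)$. Parametrize each $\eta_{n}$ by $h$-arclength from $s_{n}$, and pass to a subsequence with $s_{n}\to s\in S$ by compactness of $S$. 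I claim the $h$-lengths diverge, $\ell_{h}(\eta_{n})\to\infty$: if they were bounded along some subsequence, the limit curve theorem would (after a further subsequence) give a causal curve from $s$ whose endpoint is a limit of $(q_{n})$, so that $(q_{n})$ would have a subsequence converging to some $q\in J^{+}(S)$; but $q\notin I^{+}(S)$ (an open set which would otherwise contain $q_{n}$ for large $n$) and $q\notin E^{+}(S)$ by our choice of the sequence, so $q\notin J^{+}(S)$, a contradiction. Given $\ell_{h}(\eta_{n})\to\infty$, the limit curve theorem produces a future-inextendible causal curve $\eta$ with $\eta(0)=s\in S$, obtained as a uniform-on-compacta limit of subsegments of the $\eta_{n}$. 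Since each $\eta_{n}$ avoids the open set $I^{+}(S)$, so does $\eta$; hence $\eta$ is a future lightlike $S$-ray, contradicting the assumption that $S$ is a future null araying set. Therefore $E^{+}(S)$ is compact, and together with nonemptiness this makes $S$ a future trapped set. (It may also be worth recording along the way that $E^{+}(S)$ is achronal, since $x\ll y$ with $x\in E^{+}(S)$ forces $y\in I^{+}(J^{+}(S))=I^{+}(S)$; this is not needed above but clarifies the picture.)

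I expect the main obstacle to be precisely the step where one rules out the $h$-lengths staying bounded, i.e.\ the case in which the $q_{n}$ accumulate at a point $q$ that is a \emph{boundary} point of $J^{+}(S)$ not lying in $J^{+}(S)$ itself. This is the classical difficulty — the non-closedness of $J^{+}(S)$ — that forced Penrose to assume causal simplicity so as to have $E^{+}(S)$ generated by generators with past endpoints on $S$. Here it is circumvented only because $q\notin I^{+}(S)$ (openness) together with $q\notin E^{+}(S)$ already yields $q\notin J^{+}(S)$, so no causality condition beyond $S\cap\mathcal{C}=\emptyset$ is actually consumed; care must nonetheless be taken that the limit curve $\eta$ genuinely starts on $S$ (which the convergence $s_{n}\to s$ guarantees) and is genuinely future-inextendible (which the divergence of the $h$-lengths guarantees).
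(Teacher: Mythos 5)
Your proof is correct and follows essentially the same route as the paper: the compactness half is the same limit-curve dichotomy (a bounded-length limit would produce an accumulation point of the $q_n$ inside $E^+(S)$, while diverging lengths yield a future-inextendible causal curve avoiding $I^+(S)$, i.e.\ a forbidden lightlike $S$-ray), and your non-emptiness argument is just the standard finite-subcover/cycle proof of the fact $S\setminus I^+(S)\neq\emptyset$, which the paper cites from the literature rather than reproving. No gaps.
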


It is worth recalling that by \cite[Thm.\ 2.100]{minguzzi18b} if $S$ is a non-empty compact set that does not intersect $\mathcal{C}$, then $E^+(S)\cap S=S\backslash I^+(S)\ne \emptyset$, thus $E^+(S) \ne \emptyset$.

\begin{proof}
Suppose $E^{+}(S)$ is not compact then we can find a sequence $q_n\in E^{+}(S)$ for which no subsequence has a limit in $E^+(S)$ (this can happen because $\overline{E^+(S)}$ is non-compact or because $E^+(S)$ is non-closed).
We can find a corresponding sequence $p_n \in S$
such that $q_n \in J^{+}(p_n)$. Passing to a subsequence if
necessary, we can assume $p_n \to p \in S$.
 Consider the causal curves $\sigma_n$ connecting $p_n$ to $q_n$. Necessarily, they do not intersect $I^+(S)$ otherwise $q_n\in I^+(S)$.
By the limit curve
theorem \cite{beem96,minguzzi07c,minguzzi18b} there is a future  limit causal curve $\sigma$
starting from $p$, to which some subsequence of $\sigma_n$, here denoted in the same way, converges uniformly on compact subsets. The curve $\sigma$  does not intersect $I^+(S)$ otherwise for sufficiently large $n$, $\sigma_n$ would intersect $I^+(S)$, which is impossible. In particular, $\sigma$ does not enter $I^+(p)$ and hence it is achronal.  This causal curve  $\sigma$ cannot be a segment, i.e.\ it cannot have  compact domain, for it would connect $p$ to some point  $q\in E^+(S)$, which contradicts the assumption that no subsequence of $q_n$ converges to a point in $E^+(S)$. But then $\sigma$  is future inextendible, hence a lightlike $S$-ray, a contradiction.
\end{proof}

A kind of converse result holds true but it requires additional causality conditions, such as strong causality, that we are not imposing \cite[Thm.\ 2.116]{minguzzi18b}.

The following technical result establishes that past reflectivity is sufficient in order to infer that $E^+(S)$ has no edge. This result is really what makes the generalization of Penrose's theorem possible.

\begin{theorem} \label{edg}
Let $(M,g)$ be past reflecting. If $S$ is a compact
future null araying set that does not intersect $\mathcal{C}$,  then $\p I^+(S)=E^+(S)\ne \emptyset$ and hence $\textrm{edge}({E^{+}(S)})=\emptyset$.
\end{theorem}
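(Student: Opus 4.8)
The plan is to show that every causal curve realizing a point of $E^+(S)$ can be extended to a lightlike ray unless the point is already an endpoint lying on $S$, and that past reflectivity forces the ``edge'' behaviour to disappear. First I would recall the general set-theoretic fact that $E^+(S) \subseteq \partial I^+(S) \subseteq \partial J^+(S)$ for any set $S$, so one containment is automatic; the real content is the reverse, $\partial I^+(S) \subseteq E^+(S)$, i.e.\ no point of $\partial I^+(S)$ lies in $I^+(S)$ (trivially true, since $I^+(S)$ is open) nor fails to lie in $J^+(S)$. So the crux is to prove $\partial I^+(S) \subseteq J^+(S)$: every point on the boundary of the chronological future is actually reached by a causal curve from $S$. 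Non-emptiness of $E^+(S)$ has already been quoted from \cite[Thm.\ 2.100]{minguzzi18b} via $E^+(S)\cap S = S\setminus I^+(S)\neq\emptyset$.

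Next I would take $q \in \partial I^+(S)$ and choose a sequence $q_n \in I^+(S)$ with $q_n \to q$, together with points $p_n \in S$ and future timelike curves $\sigma_n$ from $p_n$ to $q_n$. By compactness of $S$, after passing to a subsequence $p_n \to p \in S$. Apply the limit curve theorem \cite{beem96,minguzzi07c,minguzzi18b}: either a subsequence of the $\sigma_n$ converges uniformly on compact sets to a causal curve $\sigma$ from $p$ to $q$ (the ``connecting'' case), in which case $q \in J^+(p) \subseteq J^+(S)$ and we are essentially done — $\sigma$ must be achronal, hence lightlike or degenerate, and $q \in E^+(S)$; or else the limit curve $\sigma$ from $p$ is future inextendible and does not reach $q$ (the ``escaping'' case). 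In the escaping case $\sigma$ is a future inextendible causal curve from $S$; if it avoided $I^+(S)$ it would be a future lightlike $S$-ray, contradicting the araying hypothesis, so $\sigma$ must enter $I^+(S)$. This is where past reflectivity enters: I would use it to derive a contradiction with $q \in \partial I^+(S)$ — roughly, if $\sigma$ enters $I^+(S)$ then points far out on $\sigma$ are in $I^+(S)$, and I want to conclude $q \in I^+(S)$, contradicting $q \in \partial I^+(S)$. The mechanism: the $q_n$ accumulate on $q$, and reflectivity (via formulation (c) or (d) of Definition \ref{hvw}, the condition $I^-(p') \subseteq I^-(q') \Rightarrow I^+(q') \subseteq I^+(p')$, or the continuity of $t^- = \mu(I^-(\cdot))$ that is dual to (e)) lets me transfer the ``$\sigma$ meets $I^+(S)$'' information past the limit to conclude $q$ itself is chronologically preceded by a point of $S$.

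The hard part will be exactly this last deduction: making the limit curve theorem cooperate with past reflectivity so that the escaping alternative is excluded whenever $q$ sits on $\partial I^+(S)$ rather than in the open set $I^+(S)$. I would handle it by comparing chronological pasts along the limit curve. Let $\sigma(s)$ be a point on the limit curve that lies in $I^+(S)$, so $I^-(\sigma(s))$ contains a point of $S$, in particular $p \in \overline{I^-(\sigma(s))}$ since $p = \sigma(0)$; more usefully, uniform convergence on compacta gives, for each fixed $s$, points $\sigma_n(s_n) \to \sigma(s)$ with $\sigma_n(s_n) \ll q_n$ when $s_n$ is chosen below the parameter where $\sigma_n$ terminates — then $I^-(\sigma(s)) \subseteq \overline{I^-(q)}$, so in fact $p' \ll q$ for some $p' \in S$ on the past side, forcing $q \in I^+(S)$. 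If that direct argument needs a causality crutch it does not have, past reflectivity in the form (a) applied with the pair $(p',q)$ — $p' \in \overline{J^-(q)} \Rightarrow q \in \overline{J^+(p')}$ — combined with the timelike character of the curve segment from $p'$ into $I^+(S)$, upgrades $\overline{J^-}$ to $J^-$ and yields $q \in I^+(S)$, the desired contradiction. Once $\partial I^+(S) = E^+(S)$ is established, $\textrm{edge}(E^+(S)) = \emptyset$ follows because $E^+(S) = \partial I^+(S)$ is the full topological boundary of an open set, hence an edgeless achronal $C^0$ hypersurface.
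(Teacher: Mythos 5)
Your proposal is correct and follows essentially the same route as the paper: reduce the claim to $\p I^+(S)\subseteq J^+(S)$, apply the limit curve theorem to timelike curves from $S$ to $q_n\to q$, dispose of the connecting case directly, and in the inextendible case use the araying hypothesis to force the limit curve into $I^+(S)$ and then past reflectivity to conclude $q\in I^+(S)$, contradicting $q\in\p I^+(S)$. One small correction: the reflectivity implication you actually need is the \emph{past} one, $q\in \overline{J^{+}(x)}\Rightarrow x\in \overline{J^{-}(x)}$ — more precisely $q\in\overline{I^{+}(x)}\Rightarrow x\in\overline{I^{-}(q)}$ for a point $x\in I^+(S)$ chronologically below the limit curve (the implication you quote from Definition~\ref{hvw}(a) is the future-reflecting direction, and your intermediate claim $I^-(\sigma(s))\subseteq\overline{I^-(q)}$ is precisely where past reflectivity is used, not a consequence of uniform convergence alone); once $x\in\overline{I^-(q)}$ with $x$ in the open set $I^+(S)$, the push-up through a nearby point of $I^-(q)\cap I^+(S)$ finishes the argument exactly as in the paper.
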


 For every open set $U\subset M$, with $I_U$ we  denote the chronological relation in the spacetime $U$ with the induced metric.

\begin{proof}
Suppose not then there is $q\in \p I^+(S)\backslash E^+(S)$. Let $\sigma_n$ be a sequence of timelike curves starting from $S$ and ending at $q_n$ with $q_n\to q$.  Due to the compactness of $S$, the limit curve theorem \cite{minguzzi07c,minguzzi18b} tells us that there are either a continuous causal curve connecting some $r\in S$ to $q$, which is impossible since it would entail $q\in J^+(S)$ and hence $q\in E^+(S)$, a contradiction, or there is a future inextendible continuous causal curve $\sigma^r$ with starting point  $r\in S$ to which some subsequence of $\sigma_n$, here denoted in the same way, converges uniformly on compact subsets in a suitable parametrization. Since $S$ is a future null araying set, $\sigma^r$  cannot be a future lightlike $S$-ray, hence it enters $I^+(S)$. Let $b\in \sigma^r$, such that $U\ni b$ is an open neighborhood contained in $I^+(S)$. Let $p\in I_U^-(b)$, then since $b$ is a limit point of the sequence, for sufficiently large $n$, $\sigma_n$ intersects $I_U^+(p)$, thus $q_n\in I^+(p)$ and $q\in \overline{I^+(p)}$, and by past reflectivity $p\in \overline{I^-(q)}$, thus as $p\in I^+(S)$, $q\in I^+(S)$, a contradiction. 
 \end{proof}

\begin{figure}[ht]
\centering
\includegraphics[width=11cm]{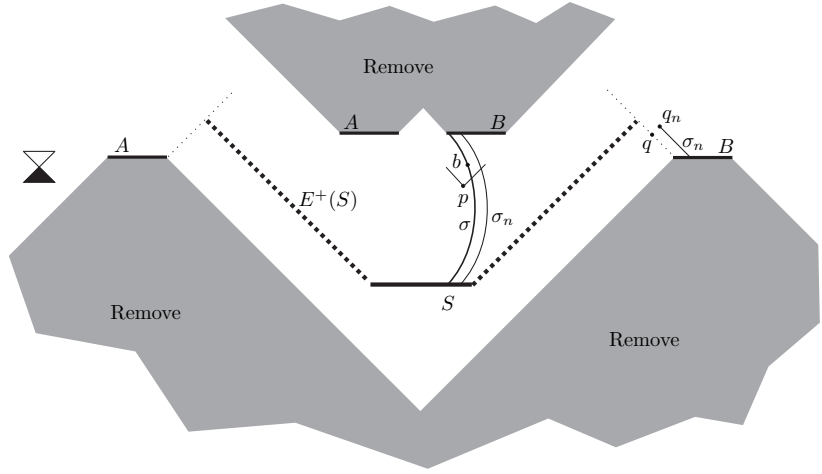}
\caption{The figure illustrates the proof by contradiction of Thm.\ \ref{edg}: suppose there is $q\in \p I^+(S)\backslash E^+(S)$..., then past reflectivity is violated, there is $p\in M$ such that $q\in \overline{J^+(p)}$ but $p\notin \overline{J^-(q)}$. Moreover, it shows that past reflectivity cannot be weakened to the property `transitivity of $\bar J$'.} \label{tra}
\end{figure}

\begin{remark}
Figure \ref{tra} is obtained by removing the shaded regions from Minkowski 1+1 spacetime, and by identifying the remaining sides of the two segments $A$ (and similarly for the two segments $B$). The spacetime is  such that $\bar J$ is transitive (moreover, it is causally easy hence stably causal). Still it is non-past reflecting. While $S$ is a compact achronal null araying set, we have that $\p I^+(S)\ne E^+(S)$ and that $E^+(S)$ has edge. Thus past reflectivity cannot be weakened to the property `transitivity of $\bar J\, $' in Thm.\ \ref{edg} (it is known that past reflectivity implies the transitivity of $\bar J$, cf.\ \cite[Remark 4.14]{minguzzi18b}).
\end{remark}

In order to capture the idea of trapped set that swallows the whole universe, as described in \cite[p.\ 265]{hawking73} and \cite{borde94} we introduce the next definition.

\begin{definition}
A compact non-empty
set $S$ that does not intersect $\mathcal{C}$ is said to have an {\em unavoidable} or {\em swallowing} future horismos if there exists an open neighborhood $U$ of $E^+(S)$ such that $I^-_U(E^+(S))\subset \textrm{Int}D^-(E^+(S))$.
\end{definition}

In fact under this condition an observer, represented by an inextendible causal curve, that were to pass through a neighborhood of the horismos $E^+(S)$ would be forced to intersect it and hence to fall into its causal influence.

\begin{theorem} \label{mvo}
Let $(M,g)$ be
past reflecting.
Every compact
future null araying set $S$ that does not intersect $\mathcal{C}$ has an unavoidable future horismos $E^+(S)$. In fact this horismos is also compact and coincident with $\p I^+(S)$. Finally, $E^+(S)\subset \textrm{Int} D(E^+(S))$.
\end{theorem}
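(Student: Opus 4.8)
The plan is to combine the structural results already established. First, by Theorem \ref{boc}, a compact future null araying set $S$ that does not intersect $\mathcal{C}$ is a future trapped set, so $E^+(S)$ is non-empty and compact. Next, by Theorem \ref{edg}, past reflectivity gives $\p I^+(S) = E^+(S)$ and $\textrm{edge}(E^+(S)) = \emptyset$. So the first two assertions of the theorem (compactness and coincidence with $\p I^+(S)$) come for free, and the content is really in the ``unavoidable/swallowing'' claim and in the final inclusion $E^+(S) \subset \textrm{Int}\, D(E^+(S))$. I would state these opening reductions in one or two sentences and then concentrate on the genuinely new part.

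For the swallowing property I must produce an open neighborhood $U$ of $E^+(S)$ with $I^-_U(E^+(S)) \subset \textrm{Int}\, D^-(E^+(S))$. The natural strategy: since $E^+(S)$ is a compact achronal topological hypersurface without edge, it locally looks like a piece of a Cauchy hypersurface, and a standard construction (e.g.\ as in the analysis of $\p J^+$ in Hawking--Ellis or Penrose's lecture notes) produces a neighborhood of a compact edgeless achronal set inside its domain of dependence. Concretely, for each $x \in E^+(S)$ choose a small causally convex coordinate cube $V_x$ in which $E^+(S)$ is the graph of a Lipschitz function over a spacelike slice; edgelessness ensures that any past-inextendible causal curve through a point just below $E^+(S)$ in $V_x$ must cross $E^+(S)$ inside $V_x$ (it cannot escape sideways past an edge). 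Taking $U$ to be a finite union of such cubes (using compactness of $E^+(S)$), shrunk so that every past causal curve from $I^-_U(E^+(S))$ stays in $U$ long enough, gives $I^-_U(E^+(S)) \subset \textrm{Int}\, D^-(E^+(S))$. I would also need the symmetric statement for the full domain of dependence $D(E^+(S))$, which follows because a point of $E^+(S)$ itself lies in $\textrm{Int}\, D(E^+(S))$ once we know every inextendible causal curve through it meets $E^+(S)$: past-inextendible curves are handled by the edgelessness argument just given, and future-inextendible curves starting on $E^+(S)$ either stay on it or immediately enter $I^+(S)$ and hence meet $I^+(S)$'s boundary — more carefully, since $E^+(S) = \p I^+(S)$ is generated by lightlike geodesics that are future inextendible in $M$ unless they meet $S$, one argues that such curves cannot avoid $E^+(S)$.

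The main obstacle I anticipate is precisely the edgelessness-to-local-graph step done without any global causality hypothesis: in the usual textbook treatment one freely uses strong causality or the existence of small causally convex neighborhoods, and here we have only that $S \cap \mathcal{C} = \emptyset$. The resolution should be that all the action takes place in the compact set $E^+(S) = \p I^+(S)$, and one needs to check that a neighborhood of this particular compact achronal edgeless set admits the required local product/graph structure even when strong causality fails elsewhere; this is plausible because achronality of $E^+(S)$ plus the absence of edge already forbids causal curves from ``wrapping'' near it, but I would need to argue it carefully rather than cite the strongly causal version. A secondary technical point is verifying that the finite union $U$ of local cubes can be chosen so that the domain-of-dependence inclusions hold globally in $U$ and not just cube-by-cube; this is a routine compactness/Lebesgue-number argument but must be spelled out. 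Once the local picture is in hand, the inclusion $I^-_U(E^+(S)) \subset \textrm{Int}\, D^-(E^+(S))$ and then $E^+(S) \subset \textrm{Int}\, D(E^+(S))$ follow by the standard "a past-inextendible causal curve from just inside must cross the edgeless hypersurface" argument.
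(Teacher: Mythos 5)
Your opening reductions (Theorem \ref{boc} for non-emptiness and compactness, Theorem \ref{edg} for $E^+(S)=\p I^+(S)$ and edgelessness) match the paper. But the heart of your argument --- that because $E^+(S)$ is a compact achronal edgeless topological hypersurface, a local Lipschitz-graph/coordinate-cube construction places a neighborhood of it inside its domain of dependence --- contains a genuine gap: no such standard construction exists, and the local claim on which you rest it is false. Edgelessness does \emph{not} ensure that a causal curve from a point just below $E^+(S)$ in a small cube $V_x$ must cross $E^+(S)$ inside $V_x$: the curve can simply exit $V_x$ sideways, underneath the graph, and then wander off without ever meeting $E^+(S)$; edgelessness is a global property of the set and forces nothing within a single chart. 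A concrete illustration of why achronality plus closedness plus edgelessness is not enough: the future null cone $\p J^+(0)$ in Minkowski space is closed, achronal and edgeless, yet no point of $I^-(\p J^+(0))$ belongs to $D^-(\p J^+(0))$, since a future-inextendible null ray from such a point can escape to infinity without meeting the cone. (You also have the time-orientation reversed: membership in $D^-(A)$ is about \emph{future}-inextendible curves from the point meeting $A$.) This is precisely the difficulty Hawking and Ellis point to in the quoted passage --- in Penrose's theorem the Cauchy surface is what guarantees that nothing escapes --- so it cannot be dispatched by a ``routine'' local argument.

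What actually closes this gap in the paper is the araying hypothesis, used twice through limit-curve arguments, together with the identity $E^+(S)=\p I^+(S)$ from Theorem \ref{edg}. For the swallowing property: if future-inextendible causal curves $\sigma_n$ from $p_n\to p\in E^+(S)$ all avoided $E^+(S)$, the limit curve $\sigma$ from $p\in J^+(S)$ would have to enter $I^+(S)$ (otherwise it would be a future lightlike $S$-ray); but then for large $n$ the curve $\sigma_n$ passes from $I^-(E^+(S))$ into $\overline{I^+(S)}$, two disjoint sets whose separating boundary is $\p I^+(S)=E^+(S)$, so $\sigma_n$ must cross $E^+(S)$ after all --- contradiction. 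For the final inclusion $E^+(S)\subset \textrm{Int}\,D(E^+(S))$ one needs yet another limit-curve argument: assuming points $\gamma(1/n)$ just above $p\in E^+(S)$ are not in $\textrm{Int}\,D^+(E^+(S))$, one extracts a past-inextendible causal curve imprisoned in the compact achronal set $E^+(S)$, hence an accumulating lightlike line contained in $E^+(S)$, which yields a future lightlike $S$-ray and again contradicts the araying property; causal convexity of $\textrm{Int}\,D(E^+(S))$ then combines the past and future halves. Your proposal contains neither of these mechanisms, and without them the statement you are trying to prove locally is simply not true.
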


Once again figure \ref{tra} shows that the assumption of past reflectivity cannot be replaced by the weaker notion of transitivity of $\bar J$.


\begin{proof}
By Thm.\ \ref{boc} $E^+(S)$ is non-empty and compact, moreover from Thm.\  \ref{edg} we know that $E^+(S)=\p I^+(S)$ (in what follows we use only these properties and the fact that $S$ is a future null araying set that does not intersect $\mathcal{C}$, past reflectivity is not used). First we show that there is an open set $U\supset E^+(S)$ such that $I^-_U(E^+(S))\subset D^-(E^+(S))$ from which the inclusion of the statement follows by the openness of $I^-_U(E^+(S))$. If the inclusion $I^-_U(E^+(S))\subset D^-(E^+(S))$ does not hold for any open neighborhood $U$ of $E^+(S)$, then by considering a sequence of nested relatively compact neighborhoods $U$ we deduce that we can find a sequence of future inextendible causal curves $\sigma_n$ starting from $p_n \in I^-(E^+(S))$ and not intersecting $E^+(S)$ such that $p_n\to p\in E^+(S)$. By the limit curve theorem there exists a future inextendible continuous causal curve $\sigma$ starting from $p$ to which a subsequence of $\sigma_n$, here denoted in the same way, converges uniformly on compact subsets. This continuous causal curve $\sigma$ enters $I^+(S)$ because $S$ is future null araying (notice that $p\in J^+(S)$).

Notice that no point of $\overline{I^+(S)}$ can belong to $I^{-}(\p {I}^+(S))$, for if $x\ll y$, with $x\in \overline{I^+(S)}$ and $y\in \p {I}^+(S)$, by the openness of $I$, $y\in I^+(S)$, a contradiction.

As $\sigma$ intersects $I^+(S)$,  for sufficiently large $n$ the same is true for $\sigma_n$, which then has to pass from $I^{-}(E^+(S))$ to the disjoint set $\overline{I^+(S)}$  hence intersecting the boundary of the latter set, $\p I^+(S)=E^+(S)$, a contradiction since $\sigma_n$ does not intersect $E^+(S)$.

Let us prove the last statement.
Let $p\in E^+(S)$ and let $\gamma$ be a timelike curve, $\gamma(0)=p$. From the just proved swallowing property we know that for some $\epsilon>0$, $\gamma((-\epsilon,0))\subset \textrm{Int} D^-(E^+(S))$. We want to show that for some $\delta>0$, $\gamma(\delta)\in \textrm{Int} D^+(E^+(S))$, then, as $\textrm{Int} D(E^+(S))$ is causally convex in $M$ \cite[Prop.\ 3.43]{minguzzi18b}, we can conclude that $p\in  \textrm{Int} D(E^+(S))$. Suppose that $\delta=1/n$ does not work for any $n$, then we can find a past inextendible causal curve $\sigma_n$ with future endpoint $p_n:=\gamma(1/n)$ not intersecting $E^+(S)$. By the limit curve theorem there is a past inextendible causal curve $\sigma$ ending at $p$, to which a subsequence  $\sigma_{n_k}$ converges. The curve $\sigma$ cannot intersect $I^+(S)$ because $p \notin I^+(S)$.
However, $\sigma$ cannot escape  $\overline{I^+(S)}$ in the past direction for, by convergence, $\sigma_{n_k}$ would do the same for sufficiently large $k$,  which would mean that it passes in the future direction from  $M\backslash \overline{I^+(S)}$ to $I^+(E^+(S))\subset \overline{I^+(S)}$, thus intersecting  $\p I^+(S)=E^+(S)$, a contradiction since $\sigma_{n_k}$ does not intersect $E^+(S)$.
We conclude that $\sigma$ is a past inextendible causal curve entirely contained in the compact achronal set $\p I^+(S)=E^+(S)$. By  \cite[Prop.\ 3.4]{minguzzi07f} \cite[Sec.\ 2.3]{minguzzi18b} there is a lightlike line $\eta$ entirely contained in $E^+(S)$ to which $\sigma$ accumulates.\footnote{Introducing a complete Riemannian metric $h$, this result follows from the application of the limit curve theorem to a sequence of $h$-arc length parametrized  causal curves of the form $\sigma_k(t)=\sigma(t-a_k)$ where $a_k$ is a  diverging sequence. Since  the $h$-arc length of the curves on both sides of the new origins $\sigma(-a_k)$ diverges, the limit curve that passes   through an accumulation point of $\{\sigma(-a_k)\}$ is inextendible rather than just past inextendible, see the references for details.} Let $q\in \eta$. As $q\in J^+(S)$ we can join $S$ to $q$ with a causal curve entirely contained in $E^+(S)$ and then continue following $\eta$ in the future direction so obtaining a future    inextendible causal curve which is actually a lightlike $S$-ray as it does not intersect $I^+(S)$, a contradiction with the araying property of $S$.
\end{proof}
%
%

\begin{definition}
 We say that a spacetime is {\em (spatially) open} if it does not contain  a compact spacelike  hypersurface.
\end{definition}

The next result will clarify that Penrose's theorem can be deduced from our Thm.\ \ref{pen}.

\begin{proposition}
Every spacetime admitting a non-compact Cauchy hypersurface (hence globally hyperbolic) is open.
\end{proposition}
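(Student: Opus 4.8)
The plan is to prove the contrapositive: assume the spacetime $(M,g)$ contains a compact spacelike hypersurface $\Sigma$ and a non-compact Cauchy hypersurface $H$, and derive a contradiction. The key geometric fact I would exploit is that in a globally hyperbolic spacetime any Cauchy hypersurface meets every inextendible timelike curve exactly once, and more generally there is a homeomorphism between $M$ and $\mathbb{R}\times H$ obtained by flowing along a timelike vector field; every spacelike hypersurface is mapped homeomorphically into $H$ by the projection $M\to H$ along the flow lines.

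First I would pick a smooth timelike vector field $X$ on $M$ (which exists since $M$ is time-orientable) and consider its flow $\phi_t$. Because $H$ is a Cauchy hypersurface, every integral curve of $X$, being an inextendible timelike curve, crosses $H$ exactly once, so there is a well-defined continuous ``projection'' $\pi\colon M\to H$ sending $x$ to the unique intersection of the $X$-orbit through $x$ with $H$. Restricting $\pi$ to the compact spacelike hypersurface $\Sigma$ gives a continuous map $\pi|_\Sigma\colon \Sigma\to H$. The crucial point is that this restriction is injective: two distinct points of a spacelike (hence achronal, at least locally, but one needs acausality here) hypersurface cannot lie on the same timelike orbit — if they did, the segment of the $X$-orbit between them would be a timelike curve joining two points of $\Sigma$, contradicting achronality of $\Sigma$. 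Here I should be careful: a spacelike hypersurface need not be globally achronal in an arbitrary spacetime, but a \emph{compact} spacelike hypersurface in a globally hyperbolic spacetime is in fact a Cauchy hypersurface (this is a standard result, e.g.\ Budic--Isenberg--Lindblom--Yau, or Galloway), and in particular acausal; this is the step I would invoke to guarantee injectivity. Alternatively one argues directly: a compact spacelike hypersurface without boundary in a globally hyperbolic spacetime, being met by every inextendible timelike curve at most once near it and being compact, must be met exactly once, hence is Cauchy.

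Next, $\pi|_\Sigma$ is not only continuous and injective but also open as a map onto $H$: locally near any point both $\Sigma$ and $H$ look like graphs over a common slice of the flow box, so $\pi$ is a local homeomorphism between them, hence $\pi(\Sigma)$ is open in $H$. Being a continuous injective map from the compact space $\Sigma$ into the Hausdorff space $H$, $\pi|_\Sigma$ is also a closed map, so $\pi(\Sigma)$ is closed in $H$. Therefore $\pi(\Sigma)$ is a non-empty subset of $H$ that is both open and closed; since $H$ is connected (Cauchy hypersurfaces of a connected spacetime are connected), we get $\pi(\Sigma)=H$. Then $\pi|_\Sigma\colon\Sigma\to H$ is a continuous bijection from a compact space onto $H$, hence a homeomorphism, so $H$ is compact — contradicting the hypothesis that $H$ is non-compact.

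The main obstacle, and the point requiring the most care, is justifying that the compact spacelike hypersurface $\Sigma$ is acausal (equivalently, that it is itself a Cauchy hypersurface), since acausality is exactly what makes the projection $\pi|_\Sigma$ injective and hence the homeomorphism argument go through; a merely achronal-but-not-acausal, or non-compact, spacelike hypersurface would not suffice. I would handle this by citing the standard fact that a compact spacelike hypersurface in a globally hyperbolic spacetime is a Cauchy hypersurface, or by the direct compactness argument sketched above. Everything else — existence of the timelike flow, the flow-box local structure, the open–closed–connected topological dichotomy — is routine.
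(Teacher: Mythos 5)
Your proof is correct and follows essentially the same route as the paper's: project the compact spacelike hypersurface $\Sigma$ along the flow of a global timelike vector field onto the Cauchy hypersurface $H$, note the image is open (since $\Sigma$ is spacelike) and compact hence closed, and use connectedness of $H$ to conclude $\pi(\Sigma)=H$. One remark: the acausality/injectivity issue you single out as the main obstacle is actually dispensable for the contradiction --- once $\pi(\Sigma)$ is a non-empty open and closed subset of the connected $H$, surjectivity alone gives that $H$ is the continuous image of the compact $\Sigma$, hence compact, contradicting non-compactness; the appeal to the compact-spacelike-implies-Cauchy theorem is therefore unnecessary (though not wrong).
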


\begin{proof}
Suppose that there is a compact spacelike hypersurface $\Sigma$. The argument is as in Penrose's theorem and makes use of the flow of a global $C^1$ timelike vector field $v$
to project $\Sigma$ to  the Cauchy hypersurface $H$. As $\Sigma$ is spacelike such a projection is open, but $\Sigma$ is compact and the projection is continuous, thus the projection of $\Sigma$ is compact and hence closed. But $H$ is connected (because $M$ is connected and  it splits topologically as $M\simeq \mathbb{R} \times H$) thus $\Sigma$ and $H$ are homeomorphic, a contradiction as $H$ is non-compact.
\end{proof}

\begin{remark}
There are other classes of open spacetimes. Standard stationary spacetimes \cite{caponio11,caponio14e} are of the form $\mathbb{R}\times S$, where  the $\mathbb{R}$-fibers are the integral curves of a timelike Killing vector field $\p_t$. For $S$ non-compact they can be easily shown to be open. The existence of a timelike Killing vector field $\p_t$ implies that they are reflecting \cite{clarke88}\cite[Thm.\ 4.10]{minguzzi18b}. Spacetimes that are  past reflecting but not future reflecting can be easily constructed from them as follows: let $A$ be a spacelike hypersurface with boundary, and remove from the spacetime the set $\cup_{t\ge 0} \varphi_t(A)$ where $\varphi$ is the flow of the Killing vector field.
Unfortunately,  due to the timelike Killing field, one cannot find trapped surfaces inside these spacetimes \cite{mars03}. One would need to use more general  metrics, e.g.\ \cite[Thm.\ 5.9]{caponio14e}, in which $\p_t$ is not necessarily timelike.  Indeed, in the presence of a Killing vector field Penrose's theorem and our theorem \ref{pen} apply only if there is a region in which the Killing vector is spacelike \cite{mars03}, see also Remark \ref{nex}.
\end{remark}

\begin{theorem} \label{alt}
Let $(M,g)$ be a
past reflecting and  open spacetime. Then it does not admit
compact future null araying sets that do not intersect $\mathcal{C}$.
\end{theorem}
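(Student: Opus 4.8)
The plan is to argue by contradiction and manufacture a compact spacelike hypersurface out of the horismos $E^+(S)$, contradicting openness. So suppose $S$ is a compact future null araying set that does not intersect $\mathcal{C}$. By Theorems \ref{boc}, \ref{edg} and \ref{mvo} we already know a great deal: $E^+(S)$ is non-empty, compact, coincides with $\partial I^+(S)$, has empty edge, and satisfies $E^+(S)\subset \mathrm{Int}\,D(E^+(S))$. The last inclusion is the crucial structural input — it says $E^+(S)$ sits in the interior of its own total Cauchy development, so nothing "leaks" past it in either time direction near the set itself.

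First I would use the swallowing/unavoidability property together with $E^+(S)\subset\mathrm{Int}\,D(E^+(S))$ to show that $E^+(S)$ is in fact an (acausal, edgeless) Cauchy hypersurface for the open globally hyperbolic region $\mathrm{Int}\,D(E^+(S))$; recall that for an achronal edgeless set $A$, $\mathrm{Int}\,D(A)$ is globally hyperbolic with $A$ as a Cauchy surface precisely when $A$ has no edge. Since $E^+(S)=\partial I^+(S)$ is a closed achronal topological hypersurface with empty edge, this is immediate from standard causal theory. Thus $\mathrm{Int}\,D(E^+(S))$ is globally hyperbolic and, by the usual splitting theorem, homeomorphic to $\mathbb{R}\times E^+(S)$, hence admits a compact Cauchy hypersurface (a suitable smoothing/level set of a Cauchy time function, which can be taken spacelike and compact since $E^+(S)$ is compact). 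This produces a compact spacelike hypersurface in $M$, contradicting the hypothesis that $(M,g)$ is open.

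The main obstacle — and the step requiring the most care — is the passage from "$E^+(S)$ is a compact $C^0$ achronal edgeless hypersurface that is a Cauchy surface of the globally hyperbolic region $\mathrm{Int}\,D(E^+(S))$" to "there is a genuinely \emph{smooth spacelike compact} hypersurface in $M$." The set $E^+(S)$ itself is only $C^0$ and is null along the generators of $\partial I^+(S)$, so it is not itself spacelike; one must invoke the smoothing of Cauchy hypersurfaces in globally hyperbolic spacetimes (Bernal--S\'anchez type results) inside $\mathrm{Int}\,D(E^+(S))$ to obtain a spacelike Cauchy hypersurface $\Sigma$ there, and then check that compactness of $E^+(S)$ forces $\Sigma$ to be compact. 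That last point follows because a Cauchy hypersurface of a globally hyperbolic spacetime is homeomorphic to any other, so $\Sigma\cong E^+(S)$ is compact. A secondary subtlety is confirming that $\mathrm{Int}\,D(E^+(S))$ is a well-behaved spacetime in its own right; this is where the unavoidability result of Theorem \ref{mvo} does its work, guaranteeing the development is "fat enough" around $E^+(S)$ and that no pathology from the chronology-violating set intrudes, since $S\cap\mathcal{C}=\emptyset$ and $E^+(S)$ is achronal.

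Alternatively, if one wishes to avoid the Cauchy-surface smoothing machinery, one can argue more directly: $E^+(S)$ being compact, achronal, edgeless and topologically embedded already makes it a compact topological hypersurface without boundary in $M$, and one then perturbs it slightly inside the open set $\mathrm{Int}\,D(E^+(S))$ using a Cauchy time function to obtain a nearby compact \emph{spacelike} hypersurface. Either route closes the argument; I expect the write-up to favor whichever of these is cleanest given the conventions already fixed in \cite{minguzzi18b}.
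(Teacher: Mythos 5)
Your proposal is correct and follows essentially the same route as the paper: assume such an $S$ exists, use Theorems \ref{boc}, \ref{edg} and \ref{mvo} to get that $E^+(S)$ is compact, edgeless and contained in $\mathrm{Int}\,D(E^+(S))$, note that this interior is globally hyperbolic, take a spacelike Cauchy hypersurface there by Bernal--S\'anchez, and conclude it is compact because it is homeomorphic to $E^+(S)$, contradicting openness. The only cosmetic difference is that the paper establishes the homeomorphism explicitly via the flow of a global timelike vector field (the standard Penrose argument) rather than citing the general fact that Cauchy hypersurfaces of a globally hyperbolic spacetime are mutually homeomorphic.
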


\begin{proof}
By contradiction, let $S$ be such an araying set.
By Thm.~\ref{edg} $E^+(S)$ is an achronal boundary hence a locally Lipschitz topological hypersurface, and by Thm.\ \ref{boc} it is compact.
By Theorem \ref{mvo} $\textrm{Int} D(E^+(S))\ne \emptyset$, thus this open set is a globally hyperbolic spacetime once endowed with the induced metric \cite[Prop.\ 6.6.3]{hawking73}\cite[Thm.\ 3.45]{minguzzi18b}, so we can find a spacelike Cauchy hypersurface $H$ for it \cite{bernal03}. By Thm.\ \ref{mvo} we have also $E^+(S) \subset \textrm{Int} D(E^+(S))$.
Let $v$ be a global $C^1$ timelike vector field. Its flow in $\textrm{Int} D(E^+(S))$ establishes a bijection between $H$ and $E^+(S)$. Indeed, given a point $p\in H$ the integral curve of $v$ passing through $p$ intersects $E^+(S)$ as $H\subset \textrm{Int} D(E^+(S))$. Similarly, given a point $q\in E^+(S)$ the  integral curve of $v$ passing through $q$ intersects $H$ as $H$ is a Cauchy hypersurface for $\textrm{Int} D(E^+(S))$. The established map is really continuous with inverse continuous thus $H$ and $E^+(S)$ are homeomorphic (this is the standard argument used in  Penrose's theorem). We conclude that $H$ is also compact and that $(M,g)$ is not spatially open, a contradiction.
\end{proof}

\begin{remark}
The previous result is optimal in the sense that past reflectivity cannot be weakened to the property ``$\bar J$ is transitive''. An example is provided by Fig.\ \ref{tra} where two vertical future intextendible timelike geodesics starting from $I^+(S)$ are removed. One has to be chosen between the two $A$ segments with starting point quite close to $S$, and similarly the other has to be chosen between the two $B$ segments. Their purpose is to make the spacetime open by removing  the compact spacelike hypersurfaces that intersect $A$ and $B$.
\end{remark}

A {\em trapped surface} is a spacelike compact codimension 2 manifold without boundary such that $\theta^+,\theta^-<0$ all over $S$, where these quantities are the divergences of the two future lightlike geodesic congruences issued from $S$. The {\em null convergence condition} states that $Ric(n,n)\ge 0$, for every null vector $n$, where $Ric$ is the Ricci tensor.


We obtain the following theorem which drops the global hyperbolicity assumption from Penrose's theorem.

\begin{theorem} \label{pen}
Let $(M,g)$ be a  past reflecting spacetime which is  open and satisfies the null convergence condition. Suppose that it admits a future trapped surface $S$ such that $S\cap \mathcal{C}=\emptyset$,
then it is future null geodesically incomplete.
\end{theorem}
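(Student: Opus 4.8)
The plan is to run Penrose's classical argument but with Theorem~\ref{alt} supplying the topological contradiction in place of the usual Cauchy-surface machinery. The key point is that the hypotheses of Theorem~\ref{pen} have been arranged precisely so that the failure of null geodesic completeness is the only escape route. So I would argue by contradiction: assume $(M,g)$ is future null geodesically complete, and derive that $S$ is a compact future null araying set disjoint from $\mathcal{C}$, contradicting Theorem~\ref{alt}.

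First I would record that $S$, being a spacelike compact codimension-2 submanifold without boundary, does not meet $\mathcal{C}$ (it is achronal in a neighbourhood, but more to the point the statement already assumes $S\cap\mathcal{C}=\emptyset$), and that $S$ is non-empty and compact. The heart of the matter is the focusing/araying step. Take any future lightlike $S$-ray $\eta$, i.e.\ a future inextendible causal curve issuing from $S$ that never enters $I^+(S)$. Since $\eta$ does not enter $I^+(S)$, it must be a null geodesic meeting $S$ orthogonally and with no focal point of $S$ along it (any focal point, or any failure of orthogonality, or any corner, would push the curve into $I^+(S)$, by the standard first/second variation arguments as in Penrose's theorem). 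But $S$ is a future trapped surface, so both null expansions $\theta^\pm$ are bounded above by a negative constant $-c<0$ on the compact set $S$; hence whichever of the two orthogonal null congruences contains $\eta$ has initial expansion $\le -c$ at the point where $\eta$ leaves $S$. Under the null convergence condition $Ric(n,n)\ge 0$, the Raychaudhuri equation for the null congruence forces a focal point of $S$ along $\eta$ within affine parameter $\le 2/c$. By future null geodesic completeness, $\eta$ extends at least that far, so the focal point is actually attained; beyond it $\eta$ enters $I^+(S)$, contradicting that $\eta$ is an $S$-ray. Therefore no future lightlike $S$-ray exists, i.e.\ $S$ is a future null araying set.

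Now I invoke Theorem~\ref{alt}: since $(M,g)$ is past reflecting and open, it admits no compact future null araying set disjoint from $\mathcal{C}$. But $S$ is exactly such a set, a contradiction. Hence the assumption of future null geodesic completeness was false, and $(M,g)$ is future null geodesically incomplete, as claimed. (Implicitly this also invokes Theorem~\ref{boc} and Theorem~\ref{mvo} only through Theorem~\ref{alt}, so nothing further is needed.)

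The main obstacle is the focusing step, specifically the claim that a future lightlike $S$-ray must be a null geodesic orthogonal to $S$ free of focal points: one has to be careful that the relevant comparison arguments (that a causal curve from $S$ which is not such a geodesic, or which contains a focal point, can be deformed to a timelike curve from $S$) go through with only the weak causality assumptions in force. The reassurance, noted in the excerpt, is that these are purely local variational arguments around $S$ and along $\eta$ and do not use global causality; the only genuinely global input, the non-existence of a compact trapped-set-like object, has been isolated into Theorem~\ref{alt}. A minor additional point to check is that the affine-parameter bound from Raychaudhuri genuinely produces a focal point of the submanifold $S$ (not merely a conjugate point along the geodesic), which is why the negativity of the initial expansion $\theta^\pm|_S$, rather than just the null convergence condition, is essential.
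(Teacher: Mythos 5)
Your proposal is correct and follows essentially the same route as the paper: assume future null geodesic completeness, use the standard orthogonality/Raychaudhuri focusing argument to show that the trapped surface $S$ is a future null araying set, and then contradict Theorem~\ref{alt}. The paper's proof is exactly this two-step reduction, with the focusing step stated in the same (classical, local) terms you give.
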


\begin{proof}
Let us assume future null geodesic completeness.
First let as prove that $S$ is future null araying.
If not $S$ is  a trapped surface that admits a future lightlike $S$-ray.
This
ray must start perpendicularly to $S$ otherwise it would be entirely
contained (save for  the starting point) in $I^{+}(S)$. Thus
this ray belongs to one of the two congruences of converging
lightlike geodesics issued from $S$. By the Raychaudhuri equation
and null geodesic completeness, this geodesic reaches a focal
point (to the surface $S$) and hence it enters $I^{+}(S)$, a contradiction
that proves that $S$ is future null araying.
However, by Thm.\ \ref{alt} this is impossible,
which proves that the spacetime is future null geodesically incomplete.
\end{proof}

\begin{remark} \label{nex}
The generality of this theorem can  already be appreciated in the study of exact solutions.  Let us show that it predicts the singularity of the  Reissner-Nordstr\"om spacetime. Indeed, it is clear from the Penrose conformal diagram \cite{hawking73} that this solution is reflecting and spatially open. Its stress-energy tensor is of electromagnetic type thus it satisfies the null convergence condition \cite[Ex.\ 5.1]{misner73}. Any spherically symmetric surface between its Cauchy and interior horizons is trapped (moreover, there are no trapped surfaces in the exterior region because there is a timelike Killing vector there), thus our theorem predicts a geodesic null singularity which is indeed found at $r=0$. On the contrary Penrose's theorem does not apply because this spacetime is not globally hyperbolic.

The anti de Sitter–Schwarzschild spacetime is another non-globally hyperbolic spacetime which admits trapped surfaces and for which our theorem applies.

In the Kerr–Newman solution \cite{carter68,hawking73} for $a^2+e^2<m^2$, $a\ne 0$, we have  two horizons given by $r=r_-$ and $r=r_+$, that separate the spacetime into three types of regions. Region III, $r<r_-$, is the chronology violating set, region II, $r_-<r<r_+$, contains trapped surfaces \cite{hawking73,senovilla11}, while region $I$, $r_+<r$ is asymptotically flat. If the closure of the chronology violating set, $r\le r_-$, is removed from the manifold we are left with a globally hyperbolic spacetime \cite{carter68}, thus  our theorem and Penrose's  could be applied. Unfortunately,  the singularity they detect is simply due to the removal of region III, for  the null geodesics leaving the trapped surface enter it. If the region III is kept Penrose's theorem cannot be applied because the spacetime is no more globally hyperbolic, and our theorem cannot be applied because past reflectivity is violated at the past boundary of the chronology violating set (cf.\ proof of \cite[Prop.\ 4.26]{minguzzi18b}). However, our theorem can possibly be applied if only the region $r\le 0$ is removed because in this case past reflectivity might hold (a proof can be tricky because the available Penrose diagrams for this solution are really relative to some section of it).

Finally,  Newman \cite{newman89} gave an example, inspired by  G\"odel solution, of  a complete spatially open spacetime  that satisfies the null convergence condition and admits a trapped surface in $\mathcal{C}$. His example shows that the condition $S\cap \mathcal{C}\ne \emptyset$ cannot be dropped from Theorem \ref{pen}. Moreover, if in his 2+1-dimensional example we remove the region $t\ge 0$ and take a slightly smaller trapped ring (which indeed remains trapped) we obtain an example of application of Theorem \ref{pen} for which the spacetime is past reflective but not chronological and in which there is a trapped surface that does not intersect $\mathcal{C}$. If instead, we remove just the region $r\ge \sinh^{-1}(1/\sqrt{2})$, $t=0$, and take a slightly smaller trapped ring we obtain an example of trapped surface $S$ whose lightlike generators are complete and for which $E^+(S)$ has edge. Here past reflectivity is violated, thus past reflectivity cannot be  dropped in Thm.\ \ref{edg}.
\end{remark}



Theorem \ref{mvo} has the following corollary which shows that Borde intuition was correct.

\begin{theorem} \label{pem}
Let $(M,g)$ be a  past reflecting spacetime which  satisfies the null convergence condition. Suppose that it admits a future trapped surface $S$ such that $S\cap\mathcal{C}=\emptyset$,
then it is either future null geodesically incomplete or the horismos $E^+(S)$ is compact, unavoidable and actually coincident with $\p I^+(S)$.
\end{theorem}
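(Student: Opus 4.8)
The plan is to run the argument as a dichotomy exactly parallel to the proof of Theorem \ref{pen}, but stopping one step earlier. Either $(M,g)$ is future null geodesically incomplete, and then the first alternative of the statement holds and there is nothing to prove; or $(M,g)$ is future null geodesically complete, and then I will show that the trapped surface $S$ is a compact future null araying set which does not intersect $\mathcal{C}$, so that Theorem \ref{mvo} applies and delivers the second alternative.

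So assume future null geodesic completeness. The only real step is to establish that $S$ is future null araying. Suppose it is not; then there exists a future lightlike $S$-ray $\gamma$, i.e.\ a future inextendible causal curve issued from $S$ which never enters $I^+(S)$. Since $S$ is a spacelike surface, $\gamma$ must leave $S$ orthogonally: otherwise $\gamma$ would be, save for its starting point, entirely contained in $I^+(S)$. Hence $\gamma$ is, up to reparametrization, a null geodesic generator of one of the two future lightlike congruences issued from $S$, whose initial expansion is $\theta^+<0$ or $\theta^-<0$ by the trapped condition. By the Raychaudhuri equation together with the null convergence condition $Ric(n,n)\ge 0$, the expansion of this congruence diverges to $-\infty$ within a finite affine parameter; by future null geodesic completeness the geodesic extends beyond that parameter, so it contains a point focal to $S$, past which it enters $I^+(S)$. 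This contradicts the defining property of an $S$-ray, so $S$ is future null araying.

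Now $S$ is a compact future null araying set with $S\cap\mathcal{C}=\emptyset$, which is precisely the hypothesis of Theorem \ref{mvo}. That theorem then gives that $E^+(S)$ is non-empty, compact, unavoidable, and coincident with $\p I^+(S)$ (and moreover $E^+(S)\subset \textrm{Int}\,D(E^+(S))$), which is the second alternative of the statement. This completes the proof.

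\textbf{Main obstacle.} There is essentially none: the delicate work — showing that $E^+(S)$ has no edge and is swallowing under mere past reflectivity — has already been carried out in Theorems \ref{edg} and \ref{mvo}, and is simply invoked here. The only substantive ingredient local to this proof is the Raychaudhuri focusing step establishing the araying property, and that is the standard argument already used in Theorem \ref{pen}. In other words, Theorem \ref{pem} is a direct corollary of Theorem \ref{mvo} together with the classical focusing argument, obtained by dropping the spatial-openness hypothesis of Theorem \ref{pen} so that one no longer appeals to Theorem \ref{alt} for a contradiction.
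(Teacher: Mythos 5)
Your proposal is correct and follows exactly the paper's own route: assume future null geodesic completeness, use the Raychaudhuri focusing argument from Theorem \ref{pen} to show $S$ is future null araying, and then invoke Theorem \ref{mvo} to obtain the compact, unavoidable horismos coincident with $\p I^+(S)$. The paper's proof is just a terser version of the same argument.
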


\begin{proof}
Suppose that $(M,g)$ is future null geodesically complete.  By arguing as in the proof of Thm.\ \ref{pen} we obtain that $S$ is a null araying set, thus from  Theorem \ref{mvo} we get that  the horismos $E^+(S)$ is compact, unavoidable and actually coincident with $\p I^+(S)$.
\end{proof}
 We conclude that unless the horismos swallows the whole universe a singularity is formed in the future of the trapped surface.

\section{Conclusions}

In Penrose's singularity theorem we find the following two conditions: (a) the spacetime is globally hyperbolic, and (b) there is a non-compact Cauchy hypersurface. We have shown that they can be weakened as follows: (a') the spacetime is past reflecting, and (b') the spacetime is spatially open.

By dropping the global hyperbolicity condition we have shown that determinism  is not necessary in order to infer that trapped surfaces lead to spacetime singularities. The classical picture one gets of the gravitational collapse is therefore consistent with that of quantum field theory on curved backgrounds, where one would have to renounce global hyperbolicity in any case, due to the phenomenon of black hole evaporation.
Our result helps to resolve some of the tension between quantum field theory and general relativity by showing that both can accommodate coherent descriptions of black hole formation and evaporation in non-globally hyperbolic spacetimes.

We have also proved a singularity theorem in which condition (b) is replaced by a condition which states that the horismos $E^+(S)$ of the trapped surface does not swallow the whole universe (in which case investigation through cosmological singularity theorems would have been more appropriate), so proving a conjecture by Borde.

\section*{Acknowledgements}
Useful conversations with Ivan P. Costa e Silva, Jos\'e Luis Flores, Miguel S\'anchez and Jos\'e Senovilla, and useful comments by the referees  are acknowledged. Work partially supported by GNFM-INDAM.


\end{document}